\titleformat*{\section}{\bf\large\center} 
\theoremstyle{definition}
\newtheorem{proposition}{Proposition}
\newtheorem{lemma}{Lemma}
\newtheorem{remark}{Remark}
\newtheorem{example}{Example}
\newtheorem{corollary}{Corollary}
\apptocmd{\sloppy}{\hbadness 10000\relax}{}{} 
\title{Alternative statistical inference for the first normalized incomplete moment}
\author{Jiannan Lu\footnote{Apple Inc. Address for correspondence: 1 Apple Park Way, Cupertino, CA 95014, United States of America. Email: jiannan\_lu@apple.com},~
Peng Ding\footnote{
Department of Statistics, University of California, Berkeley.
}~ and Anqi Zhao\footnote{Fuqua School of Business, Duke University.}}
\date{\today}
\begin{document}
\maketitle

\begin{abstract}
This paper re-examines the first normalized incomplete moment, a well-established measure of inequality with wide applications in economic and social sciences. Despite the popularity of the measure itself, existing statistical inference appears to lag behind the needs of modern-age analytics. To fill this gap, we propose an alternative solution that is intuitive, computationally efficient, mathematically equivalent to the existing solutions for ``standard'' cases, and easily adaptable to ``non-standard'' ones. The theoretical and practical advantages of the proposed methodology are demonstrated via both simulated and real-life examples. In particular, we discover that a common practice in industry can lead to highly non-trivial challenges for trustworthy statistical inference, or misleading decision making altogether.
\end{abstract}

\section{Introduction}
\label{sec:intro}

\subsection{Background}

The past decade witnessed a surge in attention on the equality aspect of artificial intelligence, from recommendation system \citep{speicher2018unified} to machine learning model performance benchmarking \citep{lazovich2022measuring}, to A/B testing \citep{niculescu2021towards}. For example, \cite{polonioli2023ethics} stressed the need to account for all individuals in ML model evaluations, while \cite{lum2022biasing} pointed out the lack of rigorous measures (and the uncertainty quantifications thereof) for AI systems. 

From the economic and social science literature on income inequality \citep{champernowne1998economic}, one common measure is the share of income by the $p$th percentile \citep{butler1989using}, which finds wide applications in economics and beyond, is free of arbitrary utility functions or ``tuning parameters'' (e.g., $\epsilon$ in Atkinson's Index), and is among the most intuitive and actionable in the context of data mining. As an example, \cite{xie2016improving} investigated the correlation between streaming hours and user retention at Netflix, and concluded that ``users with low hours are more likely to be on the fence of cancellation, and more sensitive to product changes.'' Consequently, a measure of inequality can be share of streaming hours from the bottom 10\% active users.

Mathematically, let $X \in (0, \infty)$ be a positive-valued random variable with continuous probability density function $f(x)$ and cumulative distribution function 
$
F(x).
$
In addition, we let
$
\mu
$
and
$
\sigma^2
$
denote the mean and variance of $X.$ Worth noting that $f, F, \mu$ and $\sigma^2$ are \emph{unknown} but may be estimated from the observed data. Given known constant $p \in (0, 1),$ the share of $X$ at percentile $q=F^{-1}(p)$ is
\begin{equation}
\label{eq:estimand}
m(q) 
= \frac{\mathbb{E} \left( X 1_{\{X \le q\}} \right)}{\mu}
= \frac{\int_0^{q}sf(s)ds}{\int_0^{\infty}sf(s)ds}.
\end{equation}

\begin{remark}
\eqref{eq:estimand} is the Lorenz curve co-ordinate at $q,$ or as \cite{butler1989using} referred to, the ``first normalized incomplete moment.'' It is closely related to other measures of inequality, e.g., the Pietra Index \citep{eliazar2010measuring}
$
P = F(\mu) - m(\mu).
$
and the Gini Index \citep{lerman1984note}
$$
G = \int_0^\infty f(s) \left\{ \frac{s F(s)}{\mu} - m(s) \right\} ds. \qed
$$ 
\end{remark}

\begin{example}
\label{exa:1}
We consider two distributions that are widely leveraged in data mining. First, for the log-normal distribution, let $\Phi(x)$ denote the cumulative distribution function of the standard normal random variable. By definition, the cumulative distribution function of log-normal$(\mu, \sigma)$ is
$$
F_{\mathrm{LN}(\mu, \sigma)}(x) = \Phi\left(\frac{\mathrm{ln}x - \mu}{\sigma}\right),
$$
resulting
$$
q = \mathrm{exp} \left\{ \mu + \sigma \Phi^{-1}(p) \right\},
\quad
\mu = \mathrm{exp} \left(\mu + \sigma^2 / 2 \right).
$$
For log-normal, \eqref{eq:estimand} can be solved in closed-form \citep{butler1989using}, namely,
$$
m(q) = F_{\mathrm{LN}(\mu + \sigma^2, \sigma)}(q),
\quad
p \in (0, 1).
$$

Second, consider the exponential distribution. Let 
$
f_\lambda(x) = \lambda \mathrm{e}^{-\lambda x}
$ 
denote its probability density function, and 
$
F_{\mathrm{exp}(\lambda)} = 1 - \mathrm{e}^{-\lambda x},
$
denote its cumulative distribution function, which implies
$$
q = - \mathrm{ln}(1-p) / \lambda,
\quad
\mu = \lambda^{-1},
$$
and
$$
m(q) = 1 - \mathrm{e}^{-\lambda q} (1 + \lambda q),
\quad 
q \in (0, \infty).
$$ 
It is worth noting that the scaling factor $\lambda$ is not relevant here. \qed
\end{example}


\subsection{Literature review and contribution}

Although \eqref{eq:estimand} is well-established, there are few discussions around its statistical inference. For example, \cite{butler1989using} among many others focused on pre-specified values of $q,$ but in reality we need to estimate $q$ from the observed data -- given independently and identically samples
$
X_1, \ldots, X_n
$
from unknown distribution $F,$ how to construct statistically valid point estimate and confidence interval for $m(q)?$ 

From the literature, analytical \citep{beach1983distribution} and bootstrap-based solutions \citep{dupuis2006robust} had been proposed. Unfortunately, however, they lag behind the needs of modern-age analytics (e.g., on-line A/B experiments). On the one hand, closed-form solutions tend to be unintuitive and less applicable to ``non-standard'' scenarios in industry. For example, when the sample size $n$ is sufficiently large, should we treat
$
\hat q
$
as fixed and the true value for $q?$ On the other hand, re-sampling is computationally expensive or even infeasible for big data \citep{kleiner2014scalable}. 

In light of the opportunities, we make a three-fold contribution to the community. First, we increase awareness to the potential of these measures of inequality in modern-age analytics. Second, we create an \emph{alternative} solution that is intuitive, closed-form (therefore computationally efficient), mathematically equivalent to \cite{beach1983distribution} in ``standard'' settings, and generalizable to others in practice. Third, we reveal a somewhat surprising insight -- the often default and seemingly harmless exercise of assuming
$
q \approx \hat q
$
leads to highly non-trivial challenges for, or worse, outright misleading, statistical inference.

The remainder of the paper is organized as follows. Section \ref{sec:methodology} employs $m$-estimation \citep{newey1994large, stefanski2002calculus} to derive point estimate and confidence interval for $m,$ highlighting why closed-form solutions are preferred, both theoretically and practically. Section \ref{sec:example} illustrates the proposed methodology using simulated and real-life examples. Section \ref{sec:discussion} concludes and presents future directions. 

\section{Methodology}
\label{sec:methodology}

\subsection{Theory and practice}

For any fixed $p \in (0, 1),$ we estimate $q = F^{-1}(p)$ by its sample analogue
$
\hat q = 
X_{
\left(
\lfloor np \rfloor
\right),
}
$
resulting the following plug-in estimator for \eqref{eq:estimand}:
\begin{equation}
\label{eq:est2}
\hat m (\hat q) = \frac{\frac{1}{n}\sum_{i = 1}^n X_i 1_{\{X_i \le \hat q\}}}{\frac{1}{n}\sum_{i = 1}^n X_i}.
\end{equation}
Its asymptotic consistency is guaranteed by 
$
\hat q \overset{p}{\rightarrow} q,
$ 
and by applying the central limit theorem to both the denominator and the numerator of \eqref{eq:est2}, respectively.

To quantify the uncertainty of \eqref{eq:est2}, although it is possible to leverage bootstrap, closed-form solutions are much preferred. Unfortunately, however, deriving closed-form variance estimate is challenging, because the sample percentile $\hat q$ (as a non-linear function of $X_1, \ldots, X_n$) renders 
$$
X_1 1_{\{X_1 \le \hat q\}}, \ldots, X_n 1_{\{X_n \le \hat q\}}
$$ 
``non-trivially'' correlated.

To overcome this challenge, we re-write \eqref{eq:estimand} as
\begin{equation}
\label{eq:est-eq}
\mathbb{E} 
\left\{ 
X \mathrm{1}_{\{X \leq q\}} - m(q) X 
\right\} = 0.
\end{equation}
Solving the sample analogue of \eqref{eq:est-eq} yields \eqref{eq:est2}. To infer uncertainty, we let 
$$
Y = X \mathrm{1}_{\{X \leq q\}} - m(q) X,
\quad
Z = \mathrm{1}_{\{X \leq q\}} - p,
$$
and augment the estimating equation \eqref{eq:est-eq} as
$
\bm g(X, m, q) = 
\left( Y, Z \right)^T,
$
following the strategy of \cite{newey1984method}. It is trivial that $\mathbb{E} \bm g = \bm 0.$ We now present the main result of the paper.

\begin{proposition}
\label{prop:1}
Let 
$
\bm B
=
\begin{bmatrix}
\frac{\mathbb{E} \left\{ (Y - qZ)^2 \right\}}{\mu^2}
& - \frac{\mathbb{E}\left( YZ \right) + qf(q)\mathbb{E}\left( Z^2 \right)}{\mu f(q)}
\\
- \frac{\mathbb{E}\left( YZ \right) + qf(q)\mathbb{E}\left( Z^2 \right)}{\mu f(q)}
& 
\frac{p(1-p)}{f^2(q)}
\end{bmatrix}.
$
The joint asymptotic distribution of $\hat m(\hat q)$ and $\hat q$ is
\begin{equation}
\label{eq:est2-asym-dis}
\sqrt{n}
\begin{pmatrix} 
\hat m (\hat q) - m(q)
\\ 
\hat q - q
\end{pmatrix}
\overset{d}{\rightarrow} 
\bm N 
\left\{
\bm 0, \bm B
\right\}.
\end{equation}
\qed
\end{proposition}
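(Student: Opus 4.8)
The plan is to read \eqref{eq:est2-asym-dis} as the conclusion of a standard stacked estimating-equation (Z-estimation) argument for the just-identified parameter $\bm\theta = (m, q)^\T$ with estimating function $\bm g(X, m, q) = (Y, Z)^\T$, whose population mean vanishes at the truth $\bm\theta_0 = (m(q), q)^\T$. Writing $\hat{\bm\theta} = (\hat m(\hat q), \hat q)^\T$ for the solution of the sample analogue $n^{-1}\sum_{i=1}^n \bm g(X_i, \hat{\bm\theta}) = \bm 0$, the target limit is the sandwich form $\sqrt n(\hat{\bm\theta} - \bm\theta_0) \overset{d}{\rightarrow} \bm N\{\bm 0, \bm A^{-1}\bm V (\bm A^{-1})^\T\}$, where $\bm A = \partial\, \mathbb{E}\{\bm g\}/\partial\bm\theta^\T$ evaluated at $\bm\theta_0$ (the ``bread'') and $\bm V = \mathrm{Var}\{\bm g(X, m, q)\}$ (the ``meat''). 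It then remains to verify the algebraic identity $\bm A^{-1}\bm V (\bm A^{-1})^\T = \bm B$.

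For the ingredients: the meat is immediate, with entries $\mathbb{E}(Y^2)$, $\mathbb{E}(YZ)$, and $\mathbb{E}(Z^2) = p(1-p)$, the last because $1_{\{X\le q\}}$ is Bernoulli$(p)$. For the bread, the only nontrivial step is differentiating the smooth population map; since $\mathbb{E}\{\psi_1\} = \int_0^q s f(s)\,ds - m\mu$ and $\mathbb{E}\{\psi_2\} = F(q) - p$, the fundamental theorem of calculus gives $\partial_q \int_0^q s f(s)\,ds = q f(q)$ and $\partial_q F(q) = f(q)$, so that $\bm A = \begin{bmatrix} -\mu & q f(q) \\ 0 & f(q)\end{bmatrix}$, which is triangular because $\mathbb{E}\{\psi_2\}$ is free of $m$. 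This makes $\bm A^{-1} = \begin{bmatrix} -1/\mu & q/\mu \\ 0 & 1/f(q)\end{bmatrix}$ explicit, and a direct multiplication of the sandwich reproduces the entries of $\bm B$. As a sanity check, the implied influence functions are $(Y - qZ)/\mu$ for the $m$-coordinate and $-Z/f(q)$ for the $q$-coordinate, whose variances are exactly the diagonal entries $\mathbb{E}\{(Y - qZ)^2\}/\mu^2$ and $p(1-p)/f^2(q)$, the latter matching the classical sample-quantile central limit theorem; the off-diagonal entry is their cross-covariance.

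The step I expect to be the main obstacle is justifying that the sandwich theorem applies at all, since $\bm g$ is \emph{not} differentiable in $q$: the indicator $1_{\{X\le q\}}$ has no pointwise $q$-derivative, so a naive Taylor expansion of $n^{-1}\sum_i \bm g(X_i, \hat{\bm\theta})$ is invalid. This is precisely the ``non-trivial correlation'' the authors flag, and it is where the real work lies. I would discharge it by either (a) invoking the Bahadur representation of the sample quantile, $\sqrt n(\hat q - q) = -f(q)^{-1} n^{-1/2}\sum_{i=1}^n Z_i + o_p(1)$, and substituting it into a first-order expansion of the \emph{population} map $\bm\theta \mapsto \mathbb{E}\{\bm g\}$, which is smooth even though $\bm g$ is not; or (b) a stochastic-equicontinuity argument, noting that the class $\{1_{\{x\le q\}}\}$ is a VC (hence Donsker) class, so the empirical-process remainder $\sqrt n\{(\Psi_n - \Psi)(\hat{\bm\theta}) - (\Psi_n - \Psi)(\bm\theta_0)\}$ is $o_p(1)$, after which differentiability of $\Psi = \mathbb{E}\{\bm g\}$ supplies $\bm A$. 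Either route yields the asymptotic linearization $\sqrt n(\hat{\bm\theta} - \bm\theta_0) = -\bm A^{-1}\, n^{-1/2}\sum_{i=1}^n \bm g(X_i, \bm\theta_0) + o_p(1)$; the remaining joint central limit theorem for the mean-zero score $n^{-1/2}\sum_i \bm g(X_i, \bm\theta_0)$ and the final sandwich algebra are then routine.
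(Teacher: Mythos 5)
Your route is the same as the paper's: the paper also treats $(\hat m(\hat q),\hat q)$ as the solution of the stacked equations $n^{-1}\sum_{i}\bm g(X_i,m,q)=\bm 0$ with $\bm g=(Y,Z)^\T$, forms the sandwich with the same triangular bread $\begin{bmatrix}-\mu & qf(q)\\ 0 & f(q)\end{bmatrix}$ and meat $\mathbb{E}(\bm g\bm g^\T)$, and multiplies out. Your justification of why the sandwich limit holds despite the non-differentiability of $1_{\{x\le q\}}$ (Bahadur representation, or a Donsker-class stochastic-equicontinuity argument) is in fact more careful than the paper, which simply invokes ``the theory of $m$-estimation'' with no comment on the non-smoothness.

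The gap is in your final, unexecuted step: the claim that ``a direct multiplication of the sandwich reproduces the entries of $\bm B$'' is false for the off-diagonal entry as the proposition states it. Carrying out the multiplication---or, equivalently, taking the cross-covariance of your own influence functions $(Y-qZ)/\mu$ and $-Z/f(q)$---gives
\[
-\frac{\mathbb{E}\left\{(Y-qZ)Z\right\}}{\mu f(q)}
=
-\frac{\mathbb{E}(YZ)-q\,p(1-p)}{\mu f(q)},
\]
whereas the proposition's off-diagonal is $-\left\{\mathbb{E}(YZ)+qf(q)\,p(1-p)\right\}/\left\{\mu f(q)\right\}$; the two differ by $q\,p(1-p)\left\{1+f(q)\right\}/\left\{\mu f(q)\right\}\neq 0$. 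The mismatch is not your arithmetic but an error in the paper itself: in the third line of the displayed matrix computation in its proof, the $(1,2)$ entry, which should be $-\mu\left\{f(q)\mathbb{E}(YZ)-qf(q)\mathbb{E}(Z^2)\right\}$ over $\left(f(q)\mu\right)^2$, is written as $-\left\{\mu\mathbb{E}(YZ)+qf(q)\mu\mathbb{E}(Z^2)\right\}$ over $f(q)\mu^2$ (the second term's sign is flipped and a factor of $f(q)$ is misplaced), and this slip propagates into the statement of $\bm B$. That the stated entry cannot be right is visible on dimensional grounds alone: $\mathbb{E}(YZ)=m(q)\left\{1-m(q)\right\}\mu$ scales like $X$, while $qf(q)p(1-p)$ is scale-free, so the two terms cannot be added homogeneously. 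The error is immaterial downstream---the variance estimator, Corollary \ref{coro:1}, and Appendix \ref{sec:equiv} use only the $(1,1)$ entry, and the $(2,2)$ entry is the classical quantile variance---but a correct write-up of your proof must end by exhibiting the off-diagonal entry you actually obtain and correcting, or at least flagging, the one in the statement, rather than asserting agreement.
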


\begin{proof}
The theory of $m-$estimation ensures that
\begin{equation}
\label{eq:est-eq-asym-var}
\sqrt{n} 
\begin{pmatrix} 
\hat m (\hat q) - m(q)
\\ 
\hat q - q
\end{pmatrix}
\overset{d}{\rightarrow} 
\bm N 
\left(
\bm 0, \bm B \bm M \bm B^T
\right),
\end{equation}
where
\begin{equation}
\label{eq:est-eq-asym-var-part1}
\bm B 
= 
\mathbb{E}
\left\{
\begin{pmatrix} 
\frac{\partial \bm g}{\partial m},
&
\frac{\partial \bm g}{\partial q}
\end{pmatrix}^{-1}
\right\}
= 
\begin{bmatrix}
- \mu & q f(q) \\
0 & f(q) 
\end{bmatrix}^{-1} 
=
-\frac{1}{\mu f(q)}
\begin{bmatrix}
f(q) & -qf(q) \\
0 & - \mu 
\end{bmatrix},
\end{equation}
and
\begin{equation}
\label{eq:est-eq-asym-var-part2}
\bm M
= \mathbb{E} (\bm g \bm g^T)
=
\begin{bmatrix}
\mathbb{E} \left( Y^2 \right) & \mathbb{E} \left( YZ \right) \\
\mathbb{E} \left( YZ \right) & \mathbb{E} \left( Z^2 \right)
\end{bmatrix}.
\end{equation}
Combining \eqref{eq:est-eq-asym-var}--\eqref{eq:est-eq-asym-var-part2},
\begin{align*}
\bm B \bm M \bm B^T
&= 
\frac{1}{\left(f(q)\mu\right)^2}
\begin{bmatrix}
f(q) & -qf(q) \\
0 & - \mu 
\end{bmatrix}
\begin{bmatrix}
\mathbb{E} \left( Y^2 \right) & \mathbb{E} \left( YZ \right) \\
\mathbb{E} \left( YZ \right) & \mathbb{E} \left( Z^2 \right)
\end{bmatrix} 
\begin{bmatrix}
f(q) & 0 \\
-qf(q) & - \mu 
\end{bmatrix} \\
&= 
\begin{bmatrix}
\frac{f(q) \mathbb{E}\left( Y^2 \right) -qf(q) \mathbb{E}\left( YZ \right)}{\left(f(q)\mu\right)^2} & \frac{f(q)\mathbb{E} \left( YZ \right) - qf(q)\mathbb{E}\left( Z^2 \right)}{\left(f(q)\mu\right)^2} \\
- \frac{\mu \mathbb{E}\left( YZ \right)}{\left(f(q)\mu\right)^2} & - \frac{\mu \mathbb{E}\left( Z^2 \right)}{\left(f(q)\mu\right)^2}
\end{bmatrix}
\begin{bmatrix}
f(q) & 0 \\
-qf(q) & - \mu 
\end{bmatrix} \\
&= \begin{bmatrix}
\frac{\mathbb{E} \left( Y^2 \right) -2q\mathbb{E}\left( YZ \right) + q^2 \mathbb{E}\left( Z^2 \right)}{ \mu^2}
& - \frac{\mu \mathbb{E}\left( YZ \right) + qf(q)\mu \mathbb{E}\left( Z^2 \right)}{f(q) \mu^2}
\\
- \frac{\mu \mathbb{E}\left( YZ \right) + qf(q)\mu \mathbb{E}\left( Z^2 \right)}{f(q)\mu^2}
& 
\frac{\left(\mu\right)^2 \mathbb{E}\left( Z^2 \right)}{f^2(q) \mu^2}
\end{bmatrix} \\
&= 
\begin{bmatrix}
\frac{\mathbb{E} \left\{ (Y - qZ)^2 \right\}}{\mu^2}
& - \frac{\mathbb{E}\left( YZ \right) + qf(q)\mathbb{E}\left( Z^2 \right)}{\mu f(q)}
\\
- \frac{\mathbb{E}\left( YZ \right) + qf(q)\mathbb{E}\left( Z^2 \right)}{\mu f(q)}
& 
\frac{\mathbb{E}\left( Z^2 \right)}{f^2(q)}
\end{bmatrix}.
\end{align*}
To complete the proof, note that
$
\mathbb{E}\left( Z^2 \right) = p(1-p)
$
because
$
\mathrm{1}_{\{X \leq q\}} \sim \mathrm{Bern}(p). \qed
$
\end{proof}

Now that the theoretical foundation is established, we can present the variance estimator for \eqref{eq:est2}. For 
$
i = 1, \ldots, n,
$
let
$$
\hat Y_i = X_i \mathrm{1}_{\{X_i \leq \hat q\}} - \hat m \left( \hat q \right)  X_i,
\quad
\hat Z_i = \mathrm{1}_{\{X_i \leq \hat q\}} - p.
$$
\eqref{eq:est2-asym-dis} implies the following plug-in variance estimator:
\begin{equation}
\label{eq:est2-asym-var-est}
\displaystyle
\hat{\mathbb{V}}_{\mathrm{a}} \left\{ \hat m (\hat q) \right\}
=
\frac{\sum_{i=1}^n ( \hat Y_i -  \hat q \hat Z_i )^2 }{\left(\sum_{i = 1}^n X_i\right)^2}.
\end{equation}

\begin{remark}
We highlight the following key benefits of the proposed alternative defined in \eqref{eq:est2-asym-dis}--\eqref{eq:est2-asym-var-est}:
\begin{enumerate}
\item \emph{Validity}: As a sanity check, \eqref{eq:est2-asym-dis} suggests the textbook result \citep{van2000asymptotic} in regard to the asymptotic variance of sample percentiles, namely 
$$
\hat{\mathbb{V}}_{\mathrm{a}} \left( \hat q \right)
=
\frac{p(1-p)}{nf^2(q)}.
$$
We will examine in greater details the validity of \eqref{eq:est2-asym-dis} in Section \ref{subsec:simu-1};

\item \emph{Simplicity}: In Appendix \ref{sec:equiv}, we prove that under the ``standard'' setting of
$
\hat q = 
X_{
\left(
\lfloor np \rfloor
\right)
},
$ 
\eqref{eq:est2-asym-dis} is mathematically equivalent to the asymptotic variance by \cite{beach1983distribution}. However, \eqref{eq:est2-asym-dis} is much more intuitive and simpler to compute;

\item \emph{Efficiency}: \eqref{eq:est2-asym-var-est} is more computationally efficient than bootstrap. Furthermore, after a single-pass to compute 
$
\hat q,
$
it can be directly transformed into distributed algorithms like MapReduce \citep{dean2008mapreduce}. 
\end{enumerate}
\end{remark}


\subsection{Treating $\hat q$ as fixed}

The main challenge in \eqref{eq:est2-asym-dis} is to properly account for the uncertainty of $\hat q.$ In practice where the sample size $n$ is large, we often consider 
$
\hat q
$
as fixed and the true value for $q.$ In order words, we assume
$
q \approx \hat q
$
to be known, in which case we similarly estimate \eqref{eq:estimand} by its finite-population analogue
\begin{equation}
\label{eq:est1}
\hat m (q) = \frac{\frac{1}{n}\sum_{i = 1}^n X_i 1_{\{X_i \le q\}}}{\frac{1}{n}\sum_{i = 1}^n X_i}, 
\end{equation}
whose asymptotic distribution of $\hat m_q$ is
\begin{equation}
\label{eq:est1-asym-dis}
\sqrt{n}\left\{ \hat m(q) - m(q) \right\}
\overset{d}{\rightarrow}
N \left(
0, 
\frac{\mathbb{E}Y^2}{\mu^2}
\right).
\end{equation}

\begin{proof}
With $q$ known, \eqref{eq:est-eq-asym-var} is reduced to
$$
\sqrt{n}\left\{ \hat m (q) - m(q) \right\}
 \overset{d}{\rightarrow} 
N 
\left[
0, \left\{ \mathbb{E} \left(\frac{dY}{dm} \right) \right\}^{-2} \mathbb{E}\left(Y^2\right)
\right]
=
N \left(
0, 
\frac{\mathbb{E} Y^2}{\mu^2}
\right).
$$

Alternatively, we can re-write \eqref{eq:est1} via first-order Taylor expansion \citep{deng2018applying}:
\begin{align*}
\hat m(q) - m(q)
& \approx 
\frac{\frac{1}{n}\sum_{i = 1}^n X_i 1_{\{X_i \le q\}} - \mathbb{E} \left( X 1_{\{X \le q\}} \right)}{\mu} \nonumber \\
&- \frac{\mathbb{E} \left( X 1_{\{X \le q\}} \right)}{\left( \mu \right)^2} 
\left(
\frac{1}{n}\sum_{i = 1}^n X_i - \mu
\right)
\nonumber \\
& = \frac{1}{n\mu} \sum_{i=1}^n 
\left\{
X_i 1_{\{X_i \le q\}} - \frac{\mathbb{E} \left( X 1_{\{X \le q\}} \right)}{\mu} X_i
\right\}
\nonumber \\
& = \frac{1}{n\mu} \sum_{i=1}^n Y_i.
\end{align*}
\eqref{eq:est1-asym-dis} follows immediately, because
$
Y_1, \ldots, Y_n
$ 
are i.i.d. with zero means.
\end{proof}

\subsection{A surprising insight}

To guide the trade-off between statistical rigor and computation efficiency, we quantify the difference between \eqref{eq:est2-asym-dis}, which appropriately treats $\hat q$ as a random variable, and \eqref{eq:est1-asym-dis}, which incorrectly treats $\hat q$ fixed. As it turns out, the latter can lead to misleading inference and decision-making. 

\begin{corollary}
\label{coro:1}
The difference between the variances of $\hat m (q)$ and that of $\hat m (\hat q)$ is
\begin{equation}
\label{eq:est1-est2-asym-var-diff}
\mathbb{V}_{\mathrm{a}} \left\{ \hat m (q) \right\}
- 
\mathbb{V}_{\mathrm{a}} \left\{ \hat m (\hat q) \right\}
=
\frac{q\left[
2m(q)\{1-m(q)\} \mu - q p(1-p)
\right]}{n \mu^2}.
\end{equation}
\end{corollary}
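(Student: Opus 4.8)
The plan is to subtract the two asymptotic variances directly—both are already in hand from earlier results—and to show that their difference collapses to the stated closed form once a single cross-moment is evaluated. From \eqref{eq:est1-asym-dis}, the naive variance that treats $\hat q$ as fixed is $\mathbb{V}_{\mathrm{a}}\{\hat m(q)\} = \mathbb{E}(Y^2)/(n\mu^2)$, while the $(1,1)$ entry of $\bm B$ in Proposition \ref{prop:1} supplies the correct variance $\mathbb{V}_{\mathrm{a}}\{\hat m(\hat q)\} = \mathbb{E}\{(Y-qZ)^2\}/(n\mu^2)$. Subtracting these and expanding the square $\mathbb{E}\{(Y-qZ)^2\} = \mathbb{E}(Y^2) - 2q\,\mathbb{E}(YZ) + q^2\,\mathbb{E}(Z^2)$, I would reduce the difference to
$$
\mathbb{V}_{\mathrm{a}}\{\hat m(q)\} - \mathbb{V}_{\mathrm{a}}\{\hat m(\hat q)\} = \frac{2q\,\mathbb{E}(YZ) - q^2\,\mathbb{E}(Z^2)}{n\mu^2},
$$
so that the entire statement hinges on evaluating the two moments $\mathbb{E}(Z^2)$ and $\mathbb{E}(YZ)$.

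The first moment is immediate: since $\mathrm{1}_{\{X\le q\}}\sim\mathrm{Bern}(p)$, we have $\mathbb{E}(Z^2)=p(1-p)$, exactly as invoked at the end of the proof of Proposition \ref{prop:1}. The substantive step is the cross-moment $\mathbb{E}(YZ)$. Here I would substitute $Y = X\mathrm{1}_{\{X\le q\}} - m(q)X$ and $Z = \mathrm{1}_{\{X\le q\}} - p$, expand the product into its four terms, and exploit the idempotence $\mathrm{1}_{\{X\le q\}}^2 = \mathrm{1}_{\{X\le q\}}$. Taking expectations and repeatedly applying the defining identity $\mathbb{E}(X\mathrm{1}_{\{X\le q\}}) = m(q)\mu$ from \eqref{eq:estimand}, the two terms proportional to $p$ cancel against each other, leaving the clean expression $\mathbb{E}(YZ) = m(q)\{1-m(q)\}\mu$.

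Substituting both moments back into the displayed difference and factoring $q$ out of the numerator then yields \eqref{eq:est1-est2-asym-var-diff} directly. I expect the only genuine obstacle to be the bookkeeping inside the $\mathbb{E}(YZ)$ expansion: one must carefully track the four cross terms and recognize that the $p$-dependent pieces annihilate, which is precisely what makes the final answer deceptively compact. Everything else is routine algebraic substitution using identities already established in the excerpt.
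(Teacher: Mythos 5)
Your proposal is correct and follows essentially the same route as the paper's own proof: both subtract the $(1,1)$ entry of $\bm B$ from $\mathbb{E}(Y^2)/(n\mu^2)$, expand $\mathbb{E}\{(Y-qZ)^2\}$, use $\mathbb{E}(Z^2)=p(1-p)$, and reduce everything to the cross-moment $\mathbb{E}(YZ)=m(q)\{1-m(q)\}\mu$, which the paper also obtains by expanding the product and exploiting idempotence of the indicator (there the $p$-terms drop because $\mathbb{E}(Y)=0$, which is the same cancellation you describe). No gaps; the bookkeeping you flag as the only obstacle is exactly the computation recorded in the paper's display \eqref{eq:eyz}.
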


\begin{proof}
Recall 
$
Y = X \mathrm{1}_{\{X \leq q\}} - m(q)X
$
and
$
Z = \mathrm{1}_{\{X \leq q\}} - p.
$
By definition,
$$
\mathbb{E} Y = \mathbb{E} Z = 0
\quad
\mathbb{E}\left( Z^2 \right) = p(1-p).
$$
By \eqref{eq:est2-asym-dis} and \eqref{eq:est1-asym-dis}, we have
\begin{align}
\label{eq:eyz}
\mathbb{E} (YZ)
&= \mathbb{E}
\left(
X \mathrm{1}^2_{\{X \leq q\}} - m(q) X \mathrm{1}_{\{X \leq q\}}
\right) \nonumber \\
&= m(q)\{ 1 - m(q) \} \mu.
\end{align}
Therefore,
\begin{align}
\label{eq:diff-vars}
n \mu^2
\left[
\mathbb{V}_{\mathrm{a}} \left\{ \hat m (q) \right\}
- 
\mathbb{V}_{\mathrm{a}} \left\{ \hat m (\hat q) \right\}
\right]
&= \mathbb{E}\left( Y^2 \right) - \mathbb{E} \left\{ (Y - qZ)^2 \right\}
\nonumber \\
&= 2q  \mathbb{E}\left( YZ \right) - q^2 \mathbb{E}\left( Z^2 \right)
\nonumber \\
&= 2q m(q) \{ 1 - m(q) \} \mu - q^2 p(1-p).
\end{align}
\end{proof}

\begin{remark}
Intuitively, \eqref{eq:est1-est2-asym-var-diff} should be negative, because \eqref{eq:est1-asym-dis} ``leaves out'' the uncertainty around $\hat q.$ Surprisingly, the opposite is true for a wide range of distributions. Indeed, ignoring the uncertain around $\hat q$ often diminishes the negative correlation between $X$ and $1_{\{X \le \hat q\}},$ inflating the overall variance estimate. 

We illustrate this counter-intuitive phenomenon with two common distributions. First, for $X \sim \mathrm{Unif}(0, L),$ by \eqref{eq:estimand}
$$
q = p L,
\quad
\mu = L/2,
\quad
m(q) = q^2 / L^2 = p^2.
$$
Consequently,
\begin{align*}
\mathbb{V}_{\mathrm{a}} \left\{ \hat m (q) \right\}
- 
\mathbb{V}_{\mathrm{a}} \left\{ \hat m (\hat q) \right\}
&= \frac{4p^4(1-p)}{n} \geq 0.
\end{align*}
Second, for Exp$(\lambda),$
$
q = \mathrm{ln}(1-p) / \lambda,
$
or equivalently 
$
p = 1 - \mathrm{e}^{-\lambda q}
.$
Furthermore, from Example \ref{exa:1}
$$
\mu = \lambda^{-1},
\quad
m(q) = 1 - \mathrm{e}^{-\lambda q} (1 + \lambda q).
$$
To prove that \eqref{eq:est1-est2-asym-var-diff} is non-negative, we show that
$$
2m(1-m) \mu \geq q p(1-p),
$$
or equivalently,
\begin{equation}
\label{eq:est1-est2-asym-var-diff-exp}
h(t) = 
2
\left(
\mathrm{e}^t - 1 - t
\right)
(1 + t)
-
t 
\left(
\mathrm{e}^t - 1
\right)
\ge 0
\end{equation}
for all $t = \lambda q \in (0, \infty).$ To prove \eqref{eq:est1-est2-asym-var-diff-exp}, note that 
$$
h'(t) = t \mathrm{e}^t + 3 \mathrm{e}^t - 4t - 3
$$
and
$$
h''(t) = t \mathrm{e}^t + 4 \mathrm{e}^t - 4.
$$
Is it trivial that $h''(t) \ge 0$ for all $t \ge 0.$ Consequently, $h'(t) \ge h'(0) = 0$ for all $t \ge 0,$ which implies that $h(t) \ge h(0) = 0$ for all $t \ge 0.$ $\qed$
\end{remark}


\section{Illustrations}
\label{sec:example}

\subsection{Simulation studies}
\label{subsec:simu-1}

Let\footnote{For accessibility and reproduction purposes, computer programs can be found \href{https://github.com/jiannanlu/R-programs/blob/main/fairness.R}{here}.} $p = 0.75$ and consider the two distributions in Example \ref{exa:1}:
\begin{enumerate}
    \item $\mathrm{LN}(\mu, \sigma^2),$ where $(\mu, \sigma) \in \left\{(.4, .5), (-.3, 1), (.6, .5)\right\},$
    \item $\mathrm{Exp}(\lambda),$ where $\lambda \in \left\{.5, 1, 2\right\}.$
\end{enumerate}
The parameters are intended to mimic our production data-sets. 

For each case, we allow sample size $n \in \{2000, 5000, 10000\},$ and then:
\begin{enumerate}
\item Draw 
$
X_1, \ldots, X_n \stackrel{\mathrm{i.i.d.}}{\sim} F_{\mathrm{LN}(\mu, \sigma^2)},
$ 
\item Compute the point estimate $\hat m$ in \eqref{eq:est2}, and the three variance estimates:
$
\hat{\mathbb{V}}_{\mathrm{a}} \left\{ \hat m (\hat q) \right\}
$ 
in \eqref{eq:est2-asym-var-est},
$
\hat{\mathbb{V}}_{\mathrm{a}} \left\{ \hat m (q) \right\}
$
from \eqref{eq:est1-asym-dis} (which treats $\hat q$ as fixed), and 
$
\hat{\mathbb{V}}_{\mathrm{b}} \left\{ \hat m (\hat q) \right\}
$
which is based on generic non-parametric bootstrap (i.e., re-sampling with replacement $b = 200$ times);
\item Build 95\% confidence intervals (CIs) based on the above, and record whether they cover the ground truth $m,$ respectively.
\end{enumerate}
We repeat the above process $L=5000$ times, to approximate the true variance of $\hat m(\hat q),$ with the sampling variance of the point estimates retrieved. We then compute how far (on average) the numerous variance estimates deviate from said true variance, as well as their corresponding coverage rates. Table \ref{tab:simu} reports the results, based on which we draw two key take-aways: 
\begin{enumerate}
    \item Treating $q$ as fixed can lead to severely inflated variance estimates and over-covering. 
    \item Both our proposed closed-form solution and bootstrap achieve nominal (i.e., close to 95\%) coverage rates, and have small relative biases
\end{enumerate}

\begin{sidewaystable}[ph!]
\caption{Simulation results. The first two columns contain model parameters and sample sizes. The next two contain the true $m,$ and the (approximated) true sampling variance of $\hat m.$ The next three contain the relative bias of the three variance estimators, and the last three column contain their corresponding coverage rates.} 
\begin{center}
        \begin{tabular}{c | c | c | c | c | c | c | c | c | c}
			    \multicolumn{2}{c|}{case} & \multicolumn{2}{c|}{ground truth} & \multicolumn{3}{c|}{relative bias $(\mathbb{E} \hat{\mathbb{V}} - \mathbb{V}) / \mathbb{V}$} & \multicolumn{3}{c}{coverage rate of 95\% CI} \\
			    \hline
                 log-normal & $n$ & $m$ & $\mathbb{V}\{\hat m(\hat q) \}$ & $\hat{\mathbb{V}}_{\mathrm{a}} \left\{ \hat m (\hat q) \right\} $ & $\hat{\mathbb{V}}_{\mathrm{a}} \left\{ \hat m (q) \right\}$ & $\hat{\mathbb{V}}_{\mathrm{b}} \left\{ \hat m (\hat q) \right\}$ & $\hat{\mathbb{V}}_{\mathrm{a}} \left\{ \hat m (\hat q) \right\} $ & $\hat{\mathbb{V}}_{\mathrm{a}} \left\{ \hat m (q) \right\}$ & $\hat{\mathbb{V}}_{\mathrm{b}} \left\{ \hat m (\hat q) \right\}$ \\
                \hline
			     & 2000 & 0.57 & $1.50 \times 10^{-5}$ & 0.02\% & 1062.75\% & 1.91\% & 94.66\% & 100.00\% & 94.70\%  \\
			    $\mu = 0.4, \sigma = 0.5$ & 5000 & 0.57 & $6.10 \times 10^{-6}$ & -1.32\% & 1046.47\% & -0.66\% & 94.18\% & 100.00\% & 94.34\% \\
			    & 10000 & 0.57 & $3.02 \times 10^{-6}$ & -0.18\% & 1058.28\% & 0.03\% & 94.98\% & 100.00\% & 94.76\% \\
			    \hline
			     & 2000 & 0.37 & $7.33 \times 10^{-5}$ & 1.66\% & 199.79\% & 1.64\% & 94.80\% & 99.94\% & 94.74\% \\
			    $\mu = -0.3, \sigma = 1.0$ & 5000 & 0.37 & $3.09 \times 10^{-5}$ & -3.01\% & 185.12\% & -3.05\% & 94.46\% & 99.96\% & 94.32\% \\
			    & 10000 & 0.37 & $1.54 \times 10^{-5}$ & -2.11\% & 187.25\% & -2.00\% & 94.36\% & 99.94\% & 94.38\% \\
			    \hline
			     & 2000 & 0.57 & $1.52 \times 10^{-5}$ & -1.17\% & 1050.78\% & 0.53\% & 94.48\% & 100.00\% & 94.70\% \\
			    $\mu = 0.6, \sigma = 0.5$ & 5000 & 0.57 & $6.09 \times 10^{-6}$ & -1.13\% & 1048.58\% & -0.36\% & 95.06\% & 100.00\% & 95.04\% \\
			    & 10000 & 0.57 & $2.96 \times 10^{-6}$ & 2.05\% & 1083.78\% & 2.40\% & 95.24\% & 100.00\% & 95.16\%  \\
                \hline
                 exponential & $n$ & $m$ & $\mathbb{V}\{\hat m(\hat q) \}$ & $\hat{\mathbb{V}}_{\mathrm{a}} \left\{ \hat m (\hat q) \right\} $ & $\hat{\mathbb{V}}_{\mathrm{a}} \left\{ \hat m (q) \right\}$ & $\hat{\mathbb{V}}_{\mathrm{b}} \left\{ \hat m (\hat q) \right\}$ & $\hat{\mathbb{V}}_{\mathrm{a}} \left\{ \hat m (\hat q) \right\} $ & $\hat{\mathbb{V}}_{\mathrm{a}} \left\{ \hat m (q) \right\}$ & $\hat{\mathbb{V}}_{\mathrm{b}} \left\{ \hat m (\hat q) \right\}$ \\
                \hline
			     & 2000 & 0.40 & $4.06 \times 10^{-5}$ & 0.61\% & 378.89\% & 1.79\% & 95.14\% & 100.00\% & 95.20\%  \\
			    $\lambda = 0.5$ & 5000 & 0.40 & $1.61 \times 10^{-5}$ & 1.32\% & 382.73\% & 1.64\% & 94.84\% & 100.00\% & 94.80\% \\
			    & 10000 & 0.40 & $8.01 \times 10^{-6}$ & 1.84\% & 385.17\% & 2.00\% & 95.24\% & 99.98\% & 95.34\% \\
			    \hline
			     & 2000 & 0.40 & $4.05 \times 10^{-5}$ & 0.69\% & 379.57\% & 1.82\% & 95.08\% & 100.00\% & 95.08\% \\
			    $\lambda = 1.0$ & 5000 & 0.40 & $1.58 \times 10^{-5}$ & 3.30\% & 392.28\% & 3.86\% & 95.58\% & 100.00\% & 95.64\% \\
			    & 10000 & 0.40 & $8.08 \times 10^{-6}$ & 0.95\% & 380.73\% & 1.20\% & 95.24\% & 100.00\% & 95.02\% \\
			    \hline
			     & 2000 & 0.40 & $4.18 \times 10^{-5}$ & -2.31\% & 365.05\% & -1.17\% & 94.92\% & 99.98\% & 95.00\% \\
			    $\lambda = 2.0$ & 5000 & 0.40 & $1.64 \times 10^{-5}$ & -0.61\% & 373.20\% & -0.09\% & 94.80\% & 100.00\% & 94.82\% \\
			    & 10000 & 0.40 & $8.11 \times 10^{-6}$ & 0.58\% & 379.05\% & 0.95\% & 95.42\% & 100.00\% & 95.32\%  \\
			    \hline                
		\end{tabular}
	\end{center}
\label{tab:simu}
\end{sidewaystable}

To end this section, we compare the computation times of numerous methods. We draw i.i.d. samples
$
X_1, \ldots, X_n
$ 
from the same log-normal and exponential models, calculate the proposed variance estimate in \eqref{eq:est2-asym-var-est} and bootstrap (re-sampling with replacement $b = 200$ times) variance estimate and repeat this step for $L=100$ times (to account for fluctuations between runs), and report average (over the 100 repetitions) execution times in R, for both variance estimates. Table \ref{tab:add-simu} reports the results, which clearly indicate the superiority of our proposed closed-form solution \eqref{eq:est2-asym-var-est} -- bootstrap takes 200-1000 times of execution time, depending on the sample size and number of re-sampling chosen. 

\begin{table}[H]
\caption{Simulation times based on log-normals and exponentials.}
	\begin{center}
		\begin{tabular}{c | c | c | c}
			    \multicolumn{2}{c|}{case} & \multicolumn{2}{c}{average execution time (in milliseconds)} \\
			    \hline
                 log-normal & $n$ & $\hat{\mathbb{V}}_{\mathrm{a}} \left\{ \hat m (\hat q) \right\} $ & $\hat{\mathbb{V}}_{\mathrm{b}} \left\{ \hat m (\hat q) \right\}$ \\
                \hline
			     & 2000 & 0.50 & 147.42 \\
			    $\mu = 0.4, \sigma = 0.5$ & 5000 & 0.68 & 320.92 \\
			    & 10000 & 1.37 & 508.09 \\
			    \hline
			     & 2000 & 0.35 & 130.97 \\
			    $\mu = -0.3, \sigma = 1.0$ & 5000 & 0.78 & 288.19 \\
			    & 10000 & 1.58 & 505.26 \\
			    \hline
			     & 2000 & 0.61 & 133.60 \\
			    $\mu = 0.6, \sigma = 0.5$ & 5000 & 1.21 & 317.30 \\
			    & 10000 & 1.68 & 513.75 \\
			    \hline
                  exponential & $n$ & $\hat{\mathbb{V}}_{\mathrm{a}} \left\{ \hat m (\hat q) \right\} $ & $\hat{\mathbb{V}}_{\mathrm{b}} \left\{ \hat m (\hat q) \right\}$ \\
                \hline
			     & 2000 & 0.37 & 143.60 \\
			    $\lambda = 0.5$ & 5000 & 0.86 & 292.98 \\
			    & 10000 & 1.23 & 505.96 \\
			    \hline
			     & 2000 & 0.27 & 127.16 \\
			    $\lambda = 1.0$ & 5000 & 0.50 & 289.19 \\
			    & 10000 & 1.15 & 548.83 \\
			    \hline
			     & 2000 & 0.27 & 129.56 \\
			    $\lambda = 2$ & 5000 & 0.81 & 293.16 \\
			    & 10000 & 1.41 & 516.65 \\
\hline
        \end{tabular}
	\end{center}
\label{tab:add-simu}
\end{table}

\subsection{Empirical study}

We examine the Determinants of Wages Data \citep{kleiber2008applied}, a public data-set originated from US Census Bureau's 1988 Current Population Survey. We focus on \emph{wage} (US dollars per week) and \emph{smsa} (lives in a standard metropolitan statistical area). Table \ref{tab:cps} reports the estimated shares of wages from the bottom 75\% populations. Treating $\hat q$ as constant will lead to more than 400\% inflated variance estimates. Consequently, if we test the null hypothesis 
$$
m_\mathrm{urban}\left\{F^{-1}(0.75)\right\} = m_\mathrm{suburb}\left\{F^{-1}(0.75)\right\},
$$
the corresponding $t-$statistic drops from 2.59 ($p-$val = 0.01) to 1.22 ($p-$val = 0.11), leading to outright misleading decision-making. The results again highlight that trustworthy inference like \eqref{eq:est2-asym-var-est} is imperative.

\begin{table}[H]
\caption{Statistical inference for shares of wages from the bottom 75\%.}    
	\begin{center}
		\begin{tabular}{ c |c| c | c | c}
			    group & size & $\hat m$ & $\hat{\mathbb{V}}_{\mathrm{a}} \left( \hat m \right)$  & $\hat{\mathbb{V}}_{\mathrm{a}} \left( \hat m_q \right)$ \\
			    \hline
			    urban (smsa = ``yes'') & 20932 & 0.541 & $3.34 \times 10^{-6}$ & $1.96 \times 10^{-5}$ \\
			    \hline
			    suburb (sama = ``no'') & 7223 & 0.530 & $1.42 \times 10^{-5}$ & $6.01 \times 10^{-5}$ \\
			    \hline
		\end{tabular}
	\end{center}
\label{tab:cps}
\end{table}

\section{Concluding remarks}
\label{sec:discussion}
This paper proposed an alternative variance estimator for the first normalized incomplete moment. Our proposed methodology was statistically valid, computationally cheap, and easily adaptable. In particular, we discover that a common practice under the ``big data'' assumption can lead to highly non-trivial challenges for trustworthy statistical inference, or misleading decision making.

Our work suggests multiple future directions. First, we can potentially generalize our results to the joint inference of $\left\{\hat m(\hat q)\right\}_{q \in (0, \infty)},$ which converges to a gaussian process. Second, we can extend our methodology to other inequality measures (e.g., Gini Index), which are also closely related to incomplete moments. Third, it might be interesting to investigate under which conditions \eqref{eq:est1-est2-asym-var-diff} is non-negative.

\section*{Acknowledgments} 
The authors gratefully acknowledge their common thesis advisor at Harvard, Professor Tirthankar Dasgupta. JL thanks his colleagues at Apple, Devi Krishna, Shajeev Ramamoorthy, Jiulong Shan, Daphne Luong and Alexander Braunstein, for their continued encouragement and support. PD is partially supported by the US National Science Foundation Grant \#1945136. 

\appendix

\section{Mathematical equivalence under ``standard'' settings}
\label{sec:equiv}

As pointed out by \cite{brzezinski2013asymptotic}, when
$
\hat q = 
X_{
\left(
\lfloor np \rfloor
\right)
},
$ 
the asymptotic variance of 
$
\hat m (\hat q)
$
proposed by \cite{beach1983distribution} is
\begin{equation}
\label{eq:est-eq-asym-var-bd}
\mathbb{V}_\mathrm{bd} \left\{ \hat m (\hat q) \right\}
=
\frac{p}{N \mu^2}
\left[
\sigma^2 \frac{p \gamma^2}{\mu^2} + \epsilon^2 \left(1 - 2 \frac{p\gamma}{\mu} \right) + (1 - p) (q - \gamma)^2 - 2 \frac{p\gamma}{\mu}(q - \gamma) (\mu - \gamma)
\right],
\end{equation}
where
\begin{equation}
\label{eq:bd-gamma-definition}
\gamma = \mathbb{E} (X \mid X \le q) = \frac{m(q)\mu}{p},
\end{equation}
and
\begin{equation}
\label{eq:bd-eps-definition}
\epsilon^2 = \mathbb{V} (X \mid X \le q) = \frac{
\mathbb{E}
\left(
X^2 \mathrm{1}_{\{X \leq q\}}
\right)
}
{p}  
- \gamma^2
\end{equation}
are the conditional mean and variance of $X$ given $X \le q,$ respectively. Related asymptotic analyses appear in \cite{bickel1965some} and \cite{stigler1973asymptotic}.

\begin{lemma}
\label{lemma:1}
The following holds:
\begin{equation}
\label{eq:equiv-1}
\mu^2 m^2 - (2m-1) p\gamma^2 + p(1-p)q^2 - 2q m(1-m) \mu
=
p(1 - p) (\gamma - q)^2 - 2pm(q - \gamma) (\mu - \gamma).
\end{equation}
\end{lemma}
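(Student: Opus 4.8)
The plan is to treat \eqref{eq:equiv-1} not as a free polynomial identity in $q,p,\mu,m,\gamma$ — it is \emph{not} one, since the coefficients of $\gamma^2$ on the two sides already disagree — but as an identity that becomes valid only after imposing the constraint $p\gamma = m\mu$ furnished by \eqref{eq:bd-gamma-definition}. Accordingly, the first step is to expand both sides into sums of monomials, using $(\gamma-q)^2 = \gamma^2 - 2\gamma q + q^2$ and $(q-\gamma)(\mu-\gamma) = q\mu - q\gamma - \gamma\mu + \gamma^2$ on the right, and $-(2m-1)p\gamma^2 = -2pm\gamma^2 + p\gamma^2$ together with $-2qm(1-m)\mu = -2qm\mu + 2qm^2\mu$ on the left.

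Next I would form the difference $\mathrm{LHS} - \mathrm{RHS}$ and collect by monomial \emph{before} using any constraint. The $p(1-p)q^2$ terms cancel outright, and the $m\gamma^2$ terms ($-2pm\gamma^2$ on each side) also cancel, leaving $\mathrm{LHS}-\mathrm{RHS} = \mu^2 m^2 + p^2\gamma^2 - 2qm\mu(1-p) + 2qm^2\mu + 2p(1-p)\gamma q - 2pmq\gamma - 2pm\gamma\mu$. Every surviving $\gamma$-carrying monomial now has $\gamma$ appearing in a way that pairs with a factor $p$ or $p^2$, which is exactly what the constraint can eliminate.

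The third step is to invoke $p\gamma = m\mu$, so that $p^2\gamma^2 = m^2\mu^2$, $pm\gamma\mu = m^2\mu^2$, $p\gamma q = m\mu q$, and $pmq\gamma = m^2\mu q$, and to rewrite each $\gamma$-term in terms of $m,\mu,q,p$ alone. The terms then organize into three families that each vanish: the quadratic family collapses to $\mu^2 m^2 + m^2\mu^2 - 2m^2\mu^2 = 0$; the plain and $p$-weighted $qm\mu$ terms cancel as $-2qm\mu + 2m\mu q = 0$ and $2pqm\mu - 2pm\mu q = 0$; and the $qm^2\mu$ family cancels as $2qm^2\mu - 2m^2\mu q = 0$. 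Concluding $\mathrm{LHS}-\mathrm{RHS}=0$ finishes the argument.

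I expect the only real difficulty to be clerical rather than conceptual. Because $\gamma$ and $m$ are entangled through $p\gamma = m\mu$, several a-priori distinct monomials (such as $p^2\gamma^2$, $pm\gamma\mu$, and $\mu^2 m^2$) collapse onto the same reduced term, so the cancellations become visible only after a consistent decision to eliminate $\gamma$ everywhere; the likeliest source of error is a stray factor of $p$ or a sign slip in expanding the cross-terms $(\gamma-q)^2$ and $(q-\gamma)(\mu-\gamma)$. As an independent safeguard I would also substitute $\gamma = m\mu/p$ directly into both sides at the outset and match term by term, which should reproduce the same cancellation and confirm the identity.
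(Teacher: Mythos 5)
Your proposal is correct and takes essentially the same route as the paper: both proofs rest entirely on the relation $p\gamma = m\mu$ from \eqref{eq:bd-gamma-definition} plus polynomial bookkeeping, the only difference being that the paper transforms the left side into the right side through a sequence of substitutions while you expand both sides and reduce the difference to zero. Your expansion of $\mathrm{LHS}-\mathrm{RHS}$ and the three cancellation families after substituting $p\gamma = m\mu$ check out exactly, so nothing further is needed.
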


\begin{proof}
To simplify notation, we denote $m(q)$ by $m.$ Suggested by \eqref{eq:bd-gamma-definition},
$$
\frac{p\gamma}{\mu} = m,
\quad
p\gamma = m \mu.
$$
Consequently,
\begin{align}
\label{eq:equiv-3}
- 2q m(1-m) \mu
&=
-2qm\mu + 2mqm\mu \nonumber \\
&=
-2qp\gamma + 2mqp\gamma \nonumber \\
&=
-2pq\gamma + 2p^2 q\gamma - 2pmq\mu + 2pmq\gamma \nonumber \\
&=
- 2p(1-p) q\gamma - 2pm(q\mu - q\gamma).
\end{align}
First, adding $\mu^2 m^2$ to the left hand side of \eqref{eq:equiv-3}, and
$$
2pm\gamma\mu - p^2\gamma^2 (= \mu^2 m^2)
$$
to its right hand side,
\begin{align}
\label{eq:equiv-4}
\mu^2 m^2 - 2q m(1-m) \mu
&=
2pm\gamma\mu - p^2\gamma^2 - 2p(1-p) q\gamma - 2pm(q\mu - q\gamma) \nonumber \\
&=
- p^2\gamma^2 - 2p(1-p)q\gamma - 2pm(q\mu - \gamma \mu - q\gamma).
\end{align}
Next, subtracting 
$
(2m-1) p\gamma^2
$
to both sides of \eqref{eq:equiv-4}, 
\begin{align}
\label{eq:equiv-5}
\mu^2 m^2 - (2m-1) p\gamma^2 - 2q m(1-m) \mu
&=
- p^2\gamma^2 - 2p(1-p)q\gamma - 2pm(q\mu - \gamma \mu - q\gamma) - (2m-1) p\gamma^2 \nonumber \\
&= 
p(1-p) \gamma^2 - 2p(1-p) q\gamma - 2pm(q\mu - \gamma\mu - q\gamma + \gamma^2) \nonumber \\
&=
p(1 - p) (\gamma^2-2q\gamma) - 2pm(q - \gamma) (\mu - \gamma).
\end{align}
Finally, to complete the proof of \eqref{eq:equiv-1}, simply add 
$
p(1-p)q^2
$
to both sides of \eqref{eq:equiv-5}.
\end{proof}

\begin{proposition}
\label{prop:equiv}
\cite{beach1983distribution}'s formula \eqref{eq:est-eq-asym-var-bd} is mathematically equivalent to \eqref{eq:est2-asym-dis}.
\end{proposition}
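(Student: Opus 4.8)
The plan is to show that both variance formulas, once stripped of their common factor $1/(n\mu^2),$ reduce to the same polynomial in the moments of $X,$ and that Lemma \ref{lemma:1} supplies precisely the nontrivial identity needed to close the remaining gap.

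First I would read off from \eqref{eq:est2-asym-dis} that the target variance is the $(1,1)$ entry of $\bm B$ scaled by $1/n,$ namely
$$
\mathbb{V}_\mathrm{a}\{\hat m(\hat q)\} = \frac{\mathbb{E}\{(Y - qZ)^2\}}{n\mu^2},
$$
and expand $\mathbb{E}\{(Y-qZ)^2\} = \mathbb{E}(Y^2) - 2q\,\mathbb{E}(YZ) + q^2\,\mathbb{E}(Z^2).$ The cross and quadratic terms are already in hand: from \eqref{eq:eyz}, $\mathbb{E}(YZ) = m(1-m)\mu,$ and $\mathbb{E}(Z^2) = p(1-p).$ For the remaining piece I would write $Y = X(\mathrm{1}_{\{X\le q\}} - m),$ so that $(\mathrm{1}_{\{X\le q\}} - m)^2 = (1-2m)\mathrm{1}_{\{X\le q\}} + m^2$ and hence $\mathbb{E}(Y^2) = (1-2m)\mathbb{E}(X^2\mathrm{1}_{\{X\le q\}}) + m^2\mathbb{E}(X^2).$ Substituting $\mathbb{E}(X^2) = \sigma^2 + \mu^2$ together with $\mathbb{E}(X^2\mathrm{1}_{\{X\le q\}}) = p(\epsilon^2 + \gamma^2),$ which is \eqref{eq:bd-eps-definition} rearranged, yields a closed form for $\mathbb{E}\{(Y-qZ)^2\}$ in the moment basis $\{\sigma^2, \mu, m, \gamma, \epsilon^2, p, q\}.$

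Next I would recast Beach--Davidson's formula \eqref{eq:est-eq-asym-var-bd} in the same coordinates. Multiplying the bracket by its leading $p$ and repeatedly invoking the identity $p\gamma/\mu = m$ (equivalently $p\gamma = m\mu$) from \eqref{eq:bd-gamma-definition}, the first term collapses to $\sigma^2 m^2,$ the $\epsilon^2$ term to $(1-2m)p\epsilon^2,$ and the final two terms to $p(1-p)(q-\gamma)^2$ and $-2pm(q-\gamma)(\mu-\gamma),$ so that $n\mu^2\mathbb{V}_\mathrm{bd}$ is expressed in exactly the same moments as $\mathbb{E}\{(Y-qZ)^2\}.$

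Finally I would difference the two expressions. The summands $\sigma^2 m^2$ and $(1-2m)p\epsilon^2$ cancel outright, and what remains to be verified is
$$
\mu^2 m^2 - (2m-1)p\gamma^2 - 2qm(1-m)\mu + p(1-p)q^2 = p(1-p)(\gamma-q)^2 - 2pm(q-\gamma)(\mu-\gamma),
$$
which is precisely the content of Lemma \ref{lemma:1}. The main obstacle is not any single step but the bookkeeping: one must correctly translate both the augmented-estimating-equation variance and the Beach--Davidson bracket into the common moment basis so that the conditional-variance terms in $\epsilon^2$ and the unconditional-variance terms in $\sigma^2$ line up and cancel, leaving exactly the algebraic identity already dispatched by the lemma.
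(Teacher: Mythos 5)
Your proposal is correct and follows essentially the same route as the paper's own proof: you expand $\mathbb{E}\{(Y-qZ)^2\}$ using $\mathbb{E}(YZ)=m(1-m)\mu$ and $\mathbb{E}(Y^2)=\sigma^2 m^2+\mu^2 m^2-(2m-1)p(\epsilon^2+\gamma^2)$, rewrite the Beach--Davidson bracket via $p\gamma=m\mu$ so that the $\sigma^2 m^2$ and $(1-2m)p\epsilon^2$ terms cancel, and close the remaining algebraic gap with Lemma \ref{lemma:1}, exactly as in the paper. The only difference is presentational (you derive $\mathbb{E}(Y^2)$ via the identity $(\mathrm{1}_{\{X\le q\}}-m)^2=(1-2m)\mathrm{1}_{\{X\le q\}}+m^2$ and spell out the reduction of \eqref{eq:est-eq-asym-var-bd} in more detail), not substantive.
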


\begin{proof}
On the one hand, by \eqref{eq:bd-eps-definition},
\begin{align}
\label{eq:ey2}
\mathbb{E} Y^2
&=
\mathbb{E}
\left(
m^2 X^2 - 2m X^2 \mathrm{1}_{\{X \leq q\}} + X^2 \mathrm{1}^2_{\{X \leq q\}}
\right) \nonumber \\
&=
\mathbb{E}
\left\{
m^2 X^2 - (2m -1) X^2 \mathrm{1}_{\{X \leq q\}}
\right\} \nonumber \\
&=
\sigma^2 m^2 + \mu^2 m^2 - (2m-1) \mathbb{E}\left(X^2 \mathrm{1}_{\{X \leq q\}} \right) \nonumber \\
&=
\sigma^2 m^2 + \mu^2 m^2 - (2m-1)p (\epsilon^2 + \gamma^2).
\end{align}
Using \eqref{eq:ey2} and \eqref{eq:eyz}, we re-write \eqref{eq:est2-asym-dis} as
\begin{align}
\label{eq:est2-asym-dis-2}
\mathbb{V}_\mathrm{a} \left\{ \hat m (\hat q) \right\}
&=
\frac{1}{N \mu^2}
\mathbb{E} \left(Y^2 - 2qYZ + q^2 Z^2 \right) \nonumber \\
&= 
\frac{1}{N \mu^2}
\left\{
\sigma^2 m^2 + \mu^2 m^2 - (2m-1)p (\epsilon^2 + \gamma^2)
+
q^2 p(1-p) - 2q m(1-m) \mu
\right\} \nonumber \\
&=
\frac{1}{N \mu^2}
\left\{
\sigma^2 m^2 + (1-2m)p \epsilon^2 + \mu^2 m^2 - (2m-1)p \gamma^2
+
q^2 p(1-p) - 2q m(1-m) \mu
\right\}.
\end{align}
On the other hand, we re-write \eqref{eq:est-eq-asym-var-bd} as
\begin{equation}
\label{eq:est-eq-asym-var-bd-2}
\mathbb{V}_\mathrm{bd} \left\{ \hat m (\hat q) \right\}
=
\frac{1}{N \mu^2}
\left[
\sigma^2 m^2 + (1 - 2m)p\epsilon^2  + p(1 - p) (q - \gamma)^2 - 2pm(q - \gamma) (\mu - \gamma)
\right].
\end{equation}
By Lemma \ref{lemma:1}, \eqref{eq:est-eq-asym-var-bd-2} = \eqref{eq:est2-asym-dis-2}, and the proof is complete.
\end{proof}

\newpage
\bibliographystyle{apalike}
\bibliography{reference} 

\begin{thebibliography}{}

\bibitem[Beach and Davidson, 1983]{beach1983distribution}
Beach, C.~M. and Davidson, R. (1983).
\newblock Distribution-free statistical inference with lorenz curves and income
  shares.
\newblock {\em The Review of Economic Studies}, 50(4):723--735.

\bibitem[Bickel, 1965]{bickel1965some}
Bickel, P.~J. (1965).
\newblock On some robust estimates of location.
\newblock {\em The Annals of Mathematical Statistics}, 36(3):847--858.

\bibitem[Brzezinski, 2013]{brzezinski2013asymptotic}
Brzezinski, M. (2013).
\newblock Asymptotic and bootstrap inference for top income shares.
\newblock {\em Economics Letters}, 120(1):10--13.

\bibitem[Butler and McDonald, 1989]{butler1989using}
Butler, R.~J. and McDonald, J.~B. (1989).
\newblock Using incomplete moments to measure inequality.
\newblock {\em Journal of Econometrics}, 42(1):109--119.

\bibitem[Champernowne and Cowell, 1998]{champernowne1998economic}
Champernowne, D.~G. and Cowell, F.~A. (1998).
\newblock {\em Economic Inequality and Income Distribution}.
\newblock Cambridge University Press.

\bibitem[Dean and Ghemawat, 2008]{dean2008mapreduce}
Dean, J. and Ghemawat, S. (2008).
\newblock Map{R}educe: Simplified data processing on large clusters.
\newblock {\em Communications of the ACM}, 51(1):107--113.

\bibitem[Deng et~al., 2018]{deng2018applying}
Deng, A., Knoblich, U., and Lu, J. (2018).
\newblock Applying the {D}elta method in metric analytics: A practical guide
  with novel ideas.
\newblock In {\em Proceedings of the 24th ACM SIGKDD International Conference
  on Knowledge Discovery and Data Mining}, pages 233--242.

\bibitem[Dupuis and Victoria-Feser, 2006]{dupuis2006robust}
Dupuis, D.~J. and Victoria-Feser, M.-P. (2006).
\newblock A robust prediction error criterion for pareto modelling of upper
  tails.
\newblock {\em Canadian Journal of Statistics}, 34(4):639--658.

\bibitem[Eliazar and Sokolov, 2010]{eliazar2010measuring}
Eliazar, I.~I. and Sokolov, I.~M. (2010).
\newblock Measuring statistical heterogeneity: The {P}ietra index.
\newblock {\em Physica A: Statistical Mechanics and its Applications},
  389(1):117--125.

\bibitem[Kleiber and Zeileis, 2008]{kleiber2008applied}
Kleiber, C. and Zeileis, A. (2008).
\newblock {\em Applied Econometrics with R}.
\newblock Springer Science and Business Media.

\bibitem[Kleiner et~al., 2014]{kleiner2014scalable}
Kleiner, A., Talwalkar, A., Sarkar, P., and Jordan, M.~I. (2014).
\newblock A scalable bootstrap for massive data.
\newblock {\em Journal of the Royal Statistical Society Series B: Statistical
  Methodology}, 76(4):795--816.

\bibitem[Lazovich et~al., 2022]{lazovich2022measuring}
Lazovich, T., Belli, L., Gonzales, A., Bower, A., Tantipongpipat, U., Lum, K.,
  Huszar, F., and Chowdhury, R. (2022).
\newblock Measuring disparate outcomes of content recommendation algorithms
  with distributional inequality metrics.
\newblock {\em Patterns}, 3(8):100568.

\bibitem[Lerman and Yitzhaki, 1984]{lerman1984note}
Lerman, R.~I. and Yitzhaki, S. (1984).
\newblock A note on the calculation and interpretation of the {G}ini index.
\newblock {\em Economics Letters}, 15(3):363--368.

\bibitem[Lum et~al., 2022]{lum2022biasing}
Lum, K., Zhang, Y., and Bower, A. (2022).
\newblock De-biasing “bias” measurement.
\newblock In {\em Proceedings of the 2022 ACM Conference on Fairness,
  Accountability, and Transparency}, pages 379--389.

\bibitem[Newey, 1984]{newey1984method}
Newey, W.~K. (1984).
\newblock A method of moments interpretation of sequential estimators.
\newblock {\em Economics Letters}, 14(2-3):201--206.

\bibitem[Newey and McFadden, 1994]{newey1994large}
Newey, W.~K. and McFadden, D. (1994).
\newblock Large sample estimation and hypothesis testing.
\newblock {\em Handbook of Econometrics}, 4:2111--2245.

\bibitem[Niculescu et~al., 2021]{niculescu2021towards}
Niculescu, I., Hu, H.~M., Gee, C., Chong, C., Dubey, S., and Li, P.~L. (2021).
\newblock Towards inclusive software engineering through {A}/{B} testing: {A}
  case-study at windows.
\newblock In {\em 2021 IEEE/ACM 43rd International Conference on Software
  Engineering: Software Engineering in Practice}, pages 180--187.

\bibitem[Polonioli et~al., 2023]{polonioli2023ethics}
Polonioli, A., Ghioni, R., Greco, C., Juneja, P., Tagliabue, J., Watson, D.,
  and Floridi, L. (2023).
\newblock The ethics of online controlled experiments ({A}/{B} testing).
\newblock {\em Minds and Machines}, 33(4):667--693.

\bibitem[Speicher et~al., 2018]{speicher2018unified}
Speicher, T., Heidari, H., Grgic-Hlaca, N., Gummadi, K.~P., Singla, A., Weller,
  A., and Zafar, M.~B. (2018).
\newblock A unified approach to quantifying algorithmic unfairness: {M}easuring
  individual and group unfairness via inequality indices.
\newblock In {\em Proceedings of the 24th ACM SIGKDD International Conference
  on Knowledge Discovery and Data Mining}, pages 2239--2248.

\bibitem[Stefanski and Boos, 2002]{stefanski2002calculus}
Stefanski, L. and Boos, D. (2002).
\newblock The calculus of m-estimation.
\newblock {\em American Statistician}, 56(1):29--38.

\bibitem[Stigler, 1973]{stigler1973asymptotic}
Stigler, S.~M. (1973).
\newblock The asymptotic distribution of the trimmed mean.
\newblock {\em Annals of Statistics}, 1(3):472--477.

\bibitem[Van~der Vaart, 2000]{van2000asymptotic}
Van~der Vaart, A.~W. (2000).
\newblock {\em Asymptotic Statistics}.
\newblock Cambridge University Press, 3rd edition.

\bibitem[Xie and Aurisset, 2016]{xie2016improving}
Xie, H. and Aurisset, J. (2016).
\newblock Improving the sensitivity of online controlled experiments: Case
  studies at {N}etflix.
\newblock In {\em Proceedings of the 22nd ACM SIGKDD International Conference
  on Knowledge Discovery and Data Mining}, pages 645--654.

\end{thebibliography}

\end{document}